\theoremstyle{plain}
\newtheorem{theorem}{Theorem}
\newtheorem{lemma}{Lemma}
\newtheorem{corollary}[lemma]{Corollary}
\newtheorem{proposition}[lemma]{Proposition}
\theoremstyle{definition}
\newtheorem{definition}[lemma]{Definition}
\theoremstyle{remark}
\newcommand{\defm}[1]{\emph{#1}}
\newcommand{\trace}{\operatorname{tr}}
\newcommand{\myeqref}[1]{(\ref{#1})}
\newcommand{\myref}[1]{\ref{#1}}
\newcommand{\mylabel}[1]{\label{#1}}
\newcommand{\myeqlabel}[1]{\label{#1}}  %\addtocounter{myeqn}{1}
\newcommand{\vv}{\vec v}
\newcommand{\myquote}[1]{%
    \par\hspace{1cm}\parbox{\linewidth-2cm}{#1}\hspace{1cm}\par
}
\begin{document}

\title{Instability of strong regular reflection and counterexamples to the detachment criterion}
\author{Volker Elling}
\date{}	

\maketitle

\begin{abstract}
    We consider a particular instance of reflection of shock waves in self-similar compressible flow.
    We prove that local self-similar regular reflection (RR) cannot always be extended into a global flow. 
    Therefore the detachment criterion is not universally correct.
    More precisely, consider the following \defm{angle condition}:
    the tangent of the strong-type reflected shock meets the opposite wall at a sharp or right downstream side angle.
    In cases where the condition is violated and the weak-type reflected shock is transonic, we show that global RR does not exist.
    Combined with earlier work we have shown that none of
    the classical criteria for RR$\rightarrow$MR transition is universally correct.
    A new criterion is proposed.
    Moreover, we have shown that strong-type RR is unstable,
    in the sense that global RR cannot persist under perturbations to one side.
    This yields a definite answer to the weak-strong problem because earlier work 
    shows \emph{stability} of weak RR in the same sense. 
\end{abstract}

76H05; 76L05

\section{Introduction}

Consider compressible flow. 
In \defm{regular reflection} (RR; see Figure \myref{fig:locrr-left})
an \defm{incident} shock wave meets a solid wall in a \defm{reflection point}
and continues as a second, \defm{reflected} shock.
\begin{figure}[h]
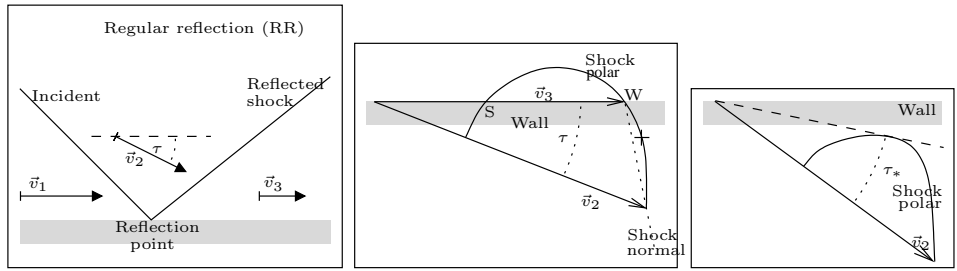

\parbox{1.2\linewidth}{\input{locrr.pstex_t}
\input{polar.pstex_t}
\input{polar2.pstex_t}}
\caption{Left: local regular reflection (RR). Center: shock polar ($\tau<\tau_*$, local RR possible). Right: $\tau>\tau_*$, local RR impossible).}
\mylabel{fig:locrr-left}
\mylabel{fig:polar-center}
\mylabel{fig:polar2-right}
\end{figure}
In many circumstances the effects of viscosity, heat conduction, boundary roughness etc.\ are negligible\footnote{for experimental examples see \cite[p.\ 142f]{van-dyke}}, 
so that inviscid models are appropriate; in this article we focus on compressible polytropic potential flow. 
Then shocks are sharp jumps satisfying Rankine-Hugoniot conditions and the \defm{slip condition} is used at walls: 
$\vec v_1,\vec v_3$ must be tangential. 

Consider a fixed constant state (velocity, density and sound speed) in the $2$-sector and vary the angle of the reflected shock. 
Each angle yields a different $3$-sector state. The curve of possible $\vec v_3$ is 
called \defm{shock polar}. The maximum angle between $\vec v_2$ and $\vec v_3$ 
is called \defm{critical angle}. If it is larger than $\tau$  (Figure \myref{fig:polar-center} center), the angle between $\vec v_2$ and wall, then there are 
\emph{two} possible reflected shocks satisfying the slip condition,
called \defm{weak-type} (W) and \defm{strong-type}\footnote{The names refer to their relative strength, but the absolute strength can be arbitrarily small or large.} (S).

There is no \emph{local} argument to rule out one type; 
the Rankine-Hugoniot and slip conditions allow both. 
At least in initial-value problems we expect uniqueness, in nature and in good mathematical models. 
For this we need to consider the \emph{global} flow that contains the reflection, 
in particular domain shape and far-field/boundary conditions far from the reflection point. 
Of course there is an infinite\footnote{In fact almost all flows with shocks include some form of shock reflection.} variety of such flows, 
but some observations and arguments apply to most if not all of them. 

If $\tau>\tau_*$, on the other hand, then even locally RR is theoretically impossible because none of the 
reflected shock angles can make $\vec v_3$ parallel to the wall (Figure \ref{fig:polar2-right} right).
Around 1875, Ernst Mach \cite{mach-wosyka} 
discovered another pattern, now named \defm{Mach reflection}
(MR; see Figure \myref{fig:highertheta-left} left), where incident and reflected shock meet off the wall in a \defm{triple point} with a third shock, the \defm{Mach stem}. 
For some parameters both RR and MR are possible. 
Starting with John von Neumann \cite{neumann-1943}, many researchers have tried to predict
the precise parameters at which the RR$\rightarrow$MR transition takes place
(see \cite{ben-dor-book,ben-dor-shockwaves2006} for a survey of this and other problems in shock
reflection).

There are three classical transition criteria. The \defm{von Neumann criterion} does not apply in potential flow\footnote{Even in Euler flow 
it applies only for sufficiently high Mach numbers.} at all. 
The \defm{detachment criterion} predicts global RR whenever a local RR exists. 
The \defm{sonic criterion}, in contrast, predicts global RR if and \emph{only} if there is a local RR 
with \defm{supersonic}\footnote{which is necessarily weak-type} reflected shock. 
All three criteria are motivated by local considerations and well-defined for any global problem; 
of course the same criterion need not be correct for all global problems. 
However, we make a stronger observation: in a particular global problem, \emph{none} of the classical criteria is correct, 
so that an entirely new criterion must be found. (The most promising candidates are modifications of the detachment 
criterion.) 

To define our problem we add a second solid wall that meets the original wall
right of the reflection point (Figure \ref{fig:mod} left), enclosing a corner angle $180^\circ-\theta$. 
To satisfy the slip condition in the constant-state $2$-sector, the opposite wall has to move with horizontal speed $\vec w=\vec w(\theta)$ so that
$\vec v_2\cdot\vec n=\vec w\cdot\vec n$ ($\vec n$ wall normal). 

\begin{figure}[h]
\input{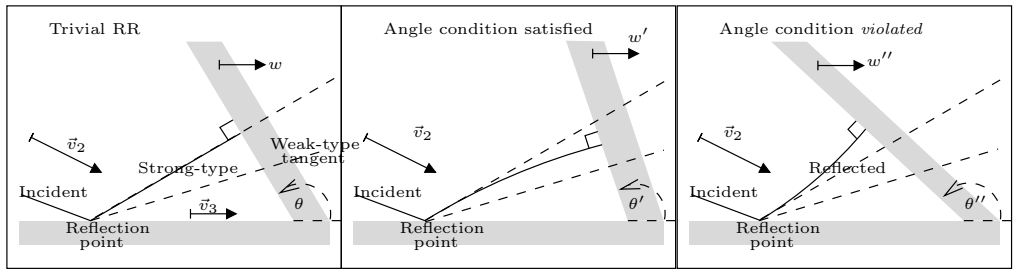}
\caption{Left: trivial RR. Center: angle condition satisfied; may or may not exist. Right: angle condition violated;
no such flow can exist, even if the reflection-point tangent is allowed to be weak-type.}
\label{fig:mod}
\end{figure}

There is exactly one $\theta$ so that the new wall is \defm{perpendicular} to the reflection-point tangent
of the strong-type reflected shock
(Figure \ref{fig:mod} left). 
In this case, $\vec v_1\cdot\vec n=\vec w\cdot\vec n$ as well, so the fluid in the $1$-sector is
\emph{also} compatible with the wall. The result is what we call a \defm{trivial RR}.

However, for any other $\theta$ the reflected shock would have to be curved (and border a non-constant region on its right), because its reflection point tangent does not
form a right angle with the new wall. So there is a large variety of nontrivial cases;
each has the same incident and reflected shock, but $\theta$ and $\vec w$ vary.

Alternatively, we may consider the coordinate system of an observer travelling 
in the wall-wall corner. He observes steady walls but moving shocks (Galilean invariance). 
Moreover, use reflection so that the new (opposite) and old (reflection) wall change places 
(Figure \ref{fig:ini-left}). 

Let $\alpha$ be the counterclockwise angle from incident shock to opposite wall in Figure \myref{fig:ini-left} left.
We have a family of problems, with parameter space consisting of triples $(M_1,\alpha,\theta)$.
At time $t=0$ the incident shock starts in the wall-wall corner (Figure \myref{fig:ini-left} left). 

\begin{figure}[h]
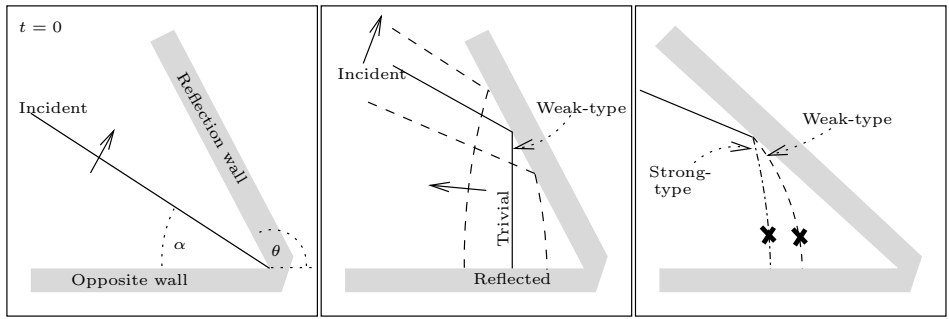

\hbox{\input{ini.pstex_t}\input{pert.pstex_t}\input{nopert.pstex_t}}
\caption{Left: initial data. Center: Weak-type trivial RR and possible perturbations (dashed).
Right: angle condition violated, neither weak-type nor strong-type global RR exist.}
\mylabel{fig:ini-left}
\mylabel{fig:pert-center}
\mylabel{fig:nopert-right}
\end{figure}

\begin{figure}[h]
\input{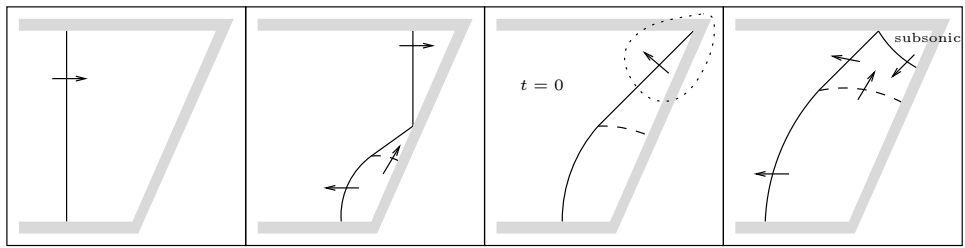}
\caption{Left: $\alpha=90^\circ$ incident approaching; left center: classical RR; right center: upper corner
is locally like Figure \myref{fig:ini-left}.}
\mylabel{fig:experiment}
\end{figure}

Such reflections occur in practice (Figure \myref{fig:experiment}). 
Experimentally a (nearly) straight vertical shock could be produced by breaking diaphragms or detonating small charges.
This shock (Figure \ref{fig:experiment} left) travels to the right through a tube, meeting the lower corner at some time. 
A first reflection occurs (Figure \myref{fig:experiment} left center). 
It is the classical case $\alpha=0^\circ$, $\theta<90^\circ$ which has been studied extensively
\cite{chen-feldman-selfsim-journal,elling-rrefl,yuxi-zheng-rref,canic-keyfitz-kim}. The reflected shock
travels up the wall, reaching a second corner at $t=0$. In that instant, 
the \emph{local} flow near the upper corner is the same
as the initial data in Figure \myref{fig:ini-left} left.

\cite{elling-sonic-potf} has already obtained global \emph{weak}-type transonic RR for small perturbations of the trivial $\theta$,
in the following class:
\begin{definition}
    \mylabel{def:flow-class}%
    Consider self-similar potential flow (see Section \myref{section:potf}).
    A \defm{transonic} (or sonic) global RR (see Figure \myref{fig:trans-left} left) 
    has a straight incident shock extending to infinity,
    meeting the reflected shock in a single \defm{reflection point} on the reflection wall.
    The incident shock separates the $1$- and $2$-sector, two regions of constant fluid state
    $\rho,c,\vv$. The reflected shock is $C^1$ including the endpoints, 
    separating the $2$- from the $3$-sector, meeting the opposite wall in a right angle.
    Flow in the interior of the $3$-sector is elliptic (pseudo-Mach number $L<1$, see 
    \myeqref{eq:pseudo-Mach}), with continuous fluid variables.
\end{definition}
The sonic criterion, in any reasonable precise formulation, predicts non-existence (and appearance of MR), so \cite{elling-sonic-potf} demonstrates
that it cannot be universally correct. 
The present paper considers the case of strong-type RR. 
\begin{definition}
    \mylabel{def:angle-condition}%
    We say $\theta$ satisfies the \defm{angle condition} if the reflection-point tangent 
    of the \emph{strong}-type reflected shock forms
    an angle $\leq 90^\circ$ (Figure \ref{fig:mod} left and center) on its \emph{downstream} side
    with the opposite wall.
\end{definition}

\begin{theorem}
    \mylabel{th:detach-wrong}%
    Consider parameters $M_1,\alpha,\theta$ so that local weak-type transonic RR exists, but 
    the angle condition (Definition \ref{def:angle-condition}) is violated. Then global RR solutions
    of the kind in Definition \myref{def:flow-class} do not exist (\emph{neither} weak-type nor strong-type).
\end{theorem}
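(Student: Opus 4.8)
The plan is to argue by contradiction: assume a global RR of the type in Definition \ref{def:flow-class} exists, with reflection point $P$ on the reflection wall and reflected shock $S$ a $C^1$ arc ending at a point $Q$ on the opposite wall, where $S$ meets that wall perpendicularly and the elliptic $3$-sector (pseudo-subsonic, $L<1$) lies on its downstream side. The first step is to locate the two endpoints on the (pseudo) shock polar based at the $2$-state. At $Q$ the self-similar slip condition forces the downstream pseudo-velocity to be tangent to the opposite wall; since $S\perp$ wall there, that pseudo-velocity is purely normal to $S$, and continuity of the tangential pseudo-velocity across $S$ forces the upstream pseudo-velocity to be normal to $S$ as well. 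Hence the pseudo-flow deflection across $S$ vanishes at $Q$, and since the flow just behind $S$ is subsonic, $Q$ is the strong (pseudo-normal-shock) endpoint of the polar. At $P$, by the definition of weak-/strong-type RR, the deflection equals the fixed value $\tau$ (the $\vv_2$--wall angle) on the weak, respectively strong, branch.

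The key structural input I would use is a monotonicity along $S$ furnished by the ellipticity of the $3$-sector: the pseudo-deflection across $S$ is monotone from $Q$ to $P$, equivalently $S$ is convex as seen from the $2$-sector. I expect this to follow from a maximum-principle / boundary-point (Hopf) argument for the elliptic equation governing the $3$-sector, comparing the data carried by $S$ (through Rankine--Hugoniot against the constant $2$-state) with the slip data on the opposite wall. Establishing this monotonicity rigorously, up to and including the endpoints, is the main obstacle and the technical heart of the proof.

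Granting the monotonicity, the \emph{weak}-type case is excluded at once: starting from deflection $0$ at the strong endpoint $Q$, a weak-branch value at $P$ could only be reached by first raising the deflection past the critical value $\tau_*>\tau$ (where the two branches meet; $\tau<\tau_*$ because local weak-type RR exists) and then lowering it back to $\tau$, which is impossible for a monotone deflection. The \emph{strong}-type case is excluded by feeding the convexity into the geometry: a shock convex toward the $2$-sector that meets the opposite wall perpendicularly at $Q$ must, at every point of $S$ and in particular at $P$, have its tangent make a downstream-side angle $\le 90^\circ$ with the opposite wall. This is exactly the angle condition of Definition \ref{def:angle-condition}; since it is assumed \emph{violated}, we obtain a contradiction. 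Hence neither weak- nor strong-type global RR of the class in Definition \ref{def:flow-class} exists.

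Two points will require care. First, because the flow is self-similar, the governing polar is the pseudo-polar based on $\vec u=\vv-\vx$, which varies with position along $S$; the branch structure (weak/strong, critical angle) and the endpoint identification at $Q$ must be read off this position-dependent polar rather than the steady one, and I would verify that the pseudo-deflection is still the right monotone invariant. Second, the convexity must be robust enough to persist up to both endpoints, so that the Hopf-type conclusion can be applied at $P$ and $Q$; this is where the hypothesis that the local weak-type RR is transonic---guaranteeing a genuinely elliptic $3$-sector with data continuous up to $S$---is essential.
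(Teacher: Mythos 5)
There is a genuine gap, and in fact the central logical step cannot be repaired as stated. Your weak-type exclusion never uses the violation of the angle condition: as written, it would prove that \emph{no} weak-type global RR in the class of Definition \ref{def:flow-class} exists, for any $\theta$ with $\tau<\tau_*$. That proves too much: \cite{elling-sonic-potf}, cited in this paper, constructs exactly such weak-type transonic solutions (in this very class) when the angle condition \emph{is} satisfied. The error sits precisely at the point you flagged but did not resolve: the pseudo-polar varies along $S$, so the critical deflection is a position-dependent function $\tau_*(\vec\xi)$, not the fixed reflection-point value $\tau_*>\tau$. The switch from the strong branch (at $Q$) to the weak branch (at $P$) occurs where the deflection equals the \emph{local} critical angle, and by Proposition \ref{prop:shockpolar} this local critical angle tends to $0$ as the local upstream pseudo-Mach number $|\vec v_2-\vec\xi|/c_2$ tends to $1$; it can therefore be $\leq\tau$ along $S$, in which case a deflection increasing monotonically from $0$ to $\tau$ is perfectly compatible with ending on the weak branch. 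So even granting your monotonicity lemma, the weak case does not follow. (For the same reason, the lemma itself could only be salvageable if its proof used the angle condition in an essential way --- otherwise it, together with your branch argument, would contradict the known existence theorem --- and no mechanism for that is proposed.) Separately, the lemma is not a routine ``technical heart'': proving convexity of a free-boundary shock for an \emph{arbitrary hypothetical} solution --- which is what a nonexistence argument needs --- is a hard problem in its own right, and ``a maximum-principle/Hopf argument'' is a hope, not a proof.

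The paper's proof is structured quite differently and avoids any structural claim about the shock curve. It applies the maximum principle directly to the potential $\psi$ on the compact set $\overline\Omega$. Violation of the angle condition makes the shock tangent at $\vec\xi_R$ point down and strictly right, so (continuity of $\psi$ across $S$ plus $\psi_x=v^x_2>0$ upstream) $\psi>\psi_0$ on $S$ near $\vec\xi_R$; hence the maximum exceeds $\psi_0$ and is not at $\vec\xi_R$. At any maximum point on $\overline S\setminus\{\vec\xi_R\}$ the tangential derivative vanishes, which by tangential continuity of velocity forces a vertical shock tangent there, and such a point lies to the right of the vertical line through $\vec\xi_R$; Proposition \ref{prop:vvert} --- which rests on convexity of the shock \emph{polar}, an algebraic fact from prior work, not convexity of the shock --- then gives $\psi_\xi=v^x\geq v^x_0>0$, contradicting maximality. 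Finally the perturbation $\hat\psi=\psi+\delta\xi$ satisfies the same linear elliptic equation, and the strong maximum principle (interior), the Hopf lemma on the walls (where $\nabla\hat\psi\cdot\vec n=\delta n^x\geq 0$), and $C^1$ regularity at the corner (where $\nabla\psi(0)=0$) exclude every remaining location for its maximum --- a contradiction that handles weak and strong type uniformly. Note also that the paper's observation in the proof of Proposition \ref{prop:vvert} (violation for the strong-type tangent forces violation for the weak-type tangent) is exactly the fact your geometric, convexity-based exclusion would have needed in order to cover the weak case correctly, rather than the polar-branch argument you used.
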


\begin{corollary}
    The following version of the detachment criterion is not universally correct:
    \addtocounter{footnote}{1}
    \footnotetext{It is violated for a subset of the parameter space
      which is open and nonempty, hence ``generic'' by any reasonable definition.}
    \newcounter{fndecay}\setcounter{fndecay}{\value{footnote}}%
    \myquote{%
        Generically\footnotemark[\value{fndecay}], when local RR exists, 
        either weak- or strong-type can be extended into a global RR.
    }
\end{corollary}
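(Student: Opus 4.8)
The plan is to argue by contradiction. Suppose a global RR of the class in Definition \ref{def:flow-class} exists, of \emph{either} type, and work in self-similar coordinates $\vx$ with the wall--wall corner at the origin, so that both walls are rays from the origin and the incident shock is a fixed straight line. Write $\Sigma$ for the reflected shock, a $C^1$ arc from the reflection point $P$ on the reflection wall to a point $Q$ on the opposite wall, carrying the constant $2$-sector on its upstream side and the elliptic ($L<1$) $3$-sector on its downstream side. Since the flow is potential, the tangential velocity is continuous across $\Sigma$, so the jump $\vv_2-\vv_3$ is everywhere normal to $\Sigma$.

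The first step is to pin down the two endpoint configurations. At $P$ the slip condition forces $\vv_3(P)$ parallel to the reflection wall, so the shock tangent at $P$ equals the prescribed weak- or strong-type shock-polar direction. At $Q$ three facts combine: the shock meets the opposite wall perpendicularly (Definition \ref{def:flow-class}), so its normal is along the wall; slip gives $\vv_3(Q)$ along the wall; and the setup forces the $2$-sector to satisfy slip on the opposite wall, i.e.\ $\vv_2$ is along the wall. Hence $\vv_2-\vv_3(Q)$ is along the wall, consistent with being normal to $\Sigma$; and because $Q$ lies on the opposite-wall ray, $\vx_Q$ is itself along the wall, so the upstream pseudo-velocity $\vv_2-\vx_Q$ is normal to $\Sigma$. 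Thus $Q$ is a pseudo-\emph{normal}-shock point (pseudo-shock angle $90^\circ$), the extreme point of the local polar, and this holds for both types. So $\Sigma$ must interpolate, as a single $C^1$ arc, between a prescribed oblique configuration at $P$ and the normal configuration at $Q$.

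The heart of the proof is a monotonicity (convexity) statement for $\Sigma$: the inclination $\beta(s)$ of the shock tangent is monotone in arclength $s$, so as one moves from $Q$ toward the reflection wall the tangent sweeps a single interval without reversal. Granting this, a monotone arc issuing perpendicularly from $Q$ can reach the reflection wall only within a restricted range of inclinations; I would translate the \emph{violated} angle condition (Definition \ref{def:angle-condition}) into the statement that the strong-type reflection-point tangent makes a downstream angle strictly exceeding $90^\circ$ with the opposite wall, placing it beyond this reachable range. Hence no admissible $\Sigma$ can terminate at $P$ with the strong-type tangent, excluding strong-type global RR. For the weak-type I would derive, from the same monotonicity together with the assumed existence of local weak-type \emph{transonic} RR (which fixes the pseudo-subsonic character of the downstream state), that the weak-type reflection-point tangent lies on the same side of the threshold direction as the strong-type one; the violation therefore defeats the necessary inclination inequality for the weak type as well. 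Either way $\Sigma$ cannot connect a valid $P$ to the perpendicular point $Q$, which is exactly the non-existence asserted in Theorem \ref{th:detach-wrong}.

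The main obstacle is the monotonicity of $\beta$. In steady flow it would follow from convexity of a fixed shock polar, but here the effective upstream pseudo-velocity $\vv_2-\vx$ rotates as $\vx$ traverses $\Sigma$, so the relevant polar is not fixed and monotonicity must be extracted from the coupling of the Rankine--Hugoniot relations with the elliptic equation governing the $3$-sector. I expect to obtain it from a sign analysis of $d\beta/ds$ driven by the normal pseudo-velocity, supported by a maximum-principle/comparison argument in the $3$-sector, while separately ruling out degeneracies — the shock turning pseudo-sonic in its interior, or $\Sigma$ acquiring a tangent parallel to a wall — that could break the rotation bound. The secondary delicate point is making the weak/strong dichotomy uniform: confirming that the violated angle condition, stated for the strong type, in fact obstructs \emph{both} reflection-point tangents once the transonic-weak hypothesis is used to locate them relative to the perpendicular direction at $Q$.
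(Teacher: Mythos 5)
There is a genuine gap, and it is structural: you have written a (sketch of a) proof of Theorem \ref{th:detach-wrong} itself, not of the corollary. The corollary asserts that the detachment criterion fails on a nonempty open set of parameters, and its proof has two ingredients: (i) the nonexistence result of Theorem \ref{th:detach-wrong}, which the paper proves separately and which the corollary may simply cite, and (ii) a rigorous demonstration that the hypothesis set of that theorem is actually nonempty (and open) --- i.e.\ that there exist parameters $(M_1,\alpha,\theta)$ for which a local weak-type \emph{transonic} RR exists while the angle condition is violated. Ingredient (ii) is the entire content of the paper's proof of the corollary: one fixes a supersonic $1$-sector state, takes a weak incident shock with $M_2>1$ and a strong-type reflected shock with $\vec v_3$ parallel to the wall, places the opposite wall on a line through $\vec x/t=\vec v_2$ so the $2$-sector slip condition holds, makes the wall--wall angle small enough to violate the angle condition, and finally forces transonicity of the weak reflected shock by letting $\tau$ grow toward $\tau_*$, using Proposition \ref{prop:shockpolar} ($M_2\downarrow 1$ forces $\tau_*\downarrow 0$ for the reflected polar, and near $\tau_*$ all types are transonic). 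Your proposal contains none of this; without it, the nonexistence theorem could in principle be vacuous, and the detachment criterion would not have been refuted. Your concluding sentence ("which is exactly the non-existence asserted in Theorem \ref{th:detach-wrong}") makes the conflation explicit.

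Secondarily, even read as a blind proof of Theorem \ref{th:detach-wrong}, your argument is not complete and does not match what can actually be carried out. Your ``heart of the proof'' --- monotonicity of the shock inclination $\beta(s)$ along $\Sigma$ --- is exactly the step you concede is unproven, and it is genuinely problematic: since the upstream pseudo-velocity $\vec v_2-\vec\xi$ rotates along the shock, there is no fixed polar whose convexity could be invoked, and no comparison lemma of the kind you describe is available in the paper or easily extracted from the elliptic equation in the $3$-sector. The paper avoids shock geometry altogether: it restricts the velocity potential $\psi$ to $\overline\Omega$, shows via Proposition \ref{prop:vvert} (a shock-polar fact about vertical shocks) that the global maximum of $\psi$ cannot lie on $\overline S$, then perturbs to $\hat\psi=\psi+\delta\xi$ and excludes interior maxima by the strong maximum principle, wall maxima by the Hopf lemma, and the corner by $C^1$ continuity, reaching a contradiction. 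If you want to salvage your route you would need to prove the inclination-monotonicity lemma from scratch; the maximum-principle argument requires no such lemma and handles both weak- and strong-type endpoints at once.
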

\begin{proof}
    It is sufficient to give a rigorous proof of existence of local RR
    satisfying the conditions of Theorem \myref{th:detach-wrong}:
    take $\vec x/t=0$ to be the reflection point.
    Choose some supersonic $1$-sector state.
    Then for sufficiently small $\tau$ (Figure \myref{fig:locrr-left} left)
    we can find a weak-type incident shock and a $2$-sector state
    with $M_2>1$, as well as a strong-type reflected shock with $\vec v_3$ parallel to the wall.
    Choose an opposite wall whose extension to a line 
    passes through the point $\vec x/t=\vec v_2$ (so that the slip condition
    \myeqref{eq:slipcond}
    in the $2$-sector is satisfied).
    If the angle between the two walls is chosen small enough, then the angle condition
    is violated.
    We can choose this local RR transonic as follows: by Proposition \myref{prop:shockpolar} applied to
    the incident shock polar, for sufficiently large $\tau$,
    $M_2\downarrow 1$, so $\tau_*\downarrow 0$ for the corresponding reflected shock polar 
    (Proposition \myref{prop:shockpolar}).
    For $\tau\approx\tau_*$, $M_3<1$ which necessarily happens as $\tau$ grows.
\end{proof}

\begin{corollary}
    \label{cor:strongunstable}%
    Strong-type trivial RR is not always structurally stable.
\end{corollary}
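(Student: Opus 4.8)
The plan is to read the instability off directly from Theorem~\ref{th:detach-wrong}, exploiting the fact that the trivial value of $\theta$ sits exactly on the boundary of the angle condition.

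First I would isolate which data are local and which are global. The reflection-point tangent of the \emph{strong}-type reflected shock is fixed by the local reflection data alone — the $1$- and $2$-sector states and the reflection wall — and hence is \emph{independent} of the opposite-wall angle $\theta$. Varying $\theta$ merely rotates the opposite wall, and rescales its speed $\vec w=\vec w(\theta)$, relative to this fixed tangent. By the definition of trivial RR, at the trivial value $\theta_{\mathrm{triv}}$ the opposite wall is perpendicular to that tangent, so the downstream-side angle in Definition~\ref{def:angle-condition} equals exactly $90^\circ$. Thus $\theta_{\mathrm{triv}}$ lies precisely on the threshold of the angle condition: as $\theta$ moves to one side the angle drops below $90^\circ$ (condition satisfied), while to the other side it exceeds $90^\circ$ (condition violated), and the crossing is transversal.

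Next I would fix the local reflection data exactly as in the proof of the detachment Corollary, so that a local weak-type transonic RR exists ($M_2>1$, $\tau<\tau_*$, and $M_3<1$ on the weak branch) \emph{and} a strong-type trivial RR — a genuine global solution of the class in Definition~\ref{def:flow-class} — is realized at $\theta=\theta_{\mathrm{triv}}$. Since the existence of the local weak-type transonic RR is a purely local, $\theta$-independent condition on the reflection-point configuration, it persists for every $\theta$ in a neighborhood of $\theta_{\mathrm{triv}}$. Now perturb $\theta$ to the side on which the downstream-side angle exceeds $90^\circ$. For these nearby parameters the angle condition is violated while the local weak-type transonic RR survives, so Theorem~\ref{th:detach-wrong} applies and asserts that \emph{no} global RR of the class in Definition~\ref{def:flow-class} — neither weak- nor strong-type — exists. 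Hence arbitrarily small perturbations of $\theta$ to one side destroy every nearby global RR, and the strong-type trivial RR cannot be structurally stable, proving the claim.

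The only point requiring genuine care is the bookkeeping that cleanly separates the local reflection data (which simultaneously fixes the strong-type reflection-point tangent and guarantees the local weak-type transonic RR) from the global angle $\theta$ (which controls only the angle condition, through the rotation of the opposite wall). Once this separation is made explicit, the transversal crossing of the $90^\circ$ threshold at $\theta_{\mathrm{triv}}$ means one side is immediately covered by Theorem~\ref{th:detach-wrong}, and no further estimate is needed; the corollary is a short deduction from the main theorem rather than an independent argument.
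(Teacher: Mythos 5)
Your proof is correct and is essentially the paper's argument: hold the local reflection data fixed, rotate the opposite wall past the $90^\circ$ threshold so the angle condition fails, and invoke Theorem~\ref{th:detach-wrong}. The paper realizes exactly this perturbation in its $(M_1,\alpha,\theta)$ coordinates as $(\alpha-\delta,\theta+\delta)$ --- keeping $\alpha+\theta$, hence the local RR, fixed, which is your ``rotate only the opposite wall'' bookkeeping --- while your added care that the weak-type companion shock be transonic (so the theorem's hypothesis holds) is a point the paper's two-line proof leaves implicit.
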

\begin{proof}
    Given $M_1,\alpha,\theta$ for a strong-type trivial RR,
    perturb to $\alpha-\delta$, $\theta+\delta$
    for some small $\delta>0$.
    Then we are in the situation of Theorem \myref{th:detach-wrong} where global RR cannot exist.
\end{proof}

In summary, we have obtained two separate results. First, the detachment criterion is not universally correct. 
Note however that we have discussed only some cases with $\theta>90^\circ$. It would be interesting to find extensions to the 
classical case $\theta<90^\circ=\alpha$; in that case, the detachment criterion is probably correct. 
A new RR$\rightarrow$MR transition criterion is proposed in Section \myref{section:newcrit}. 

Second, while weak-type transonic trivial RR is structurally stable, \emph{strong}-type is \emph{not} (Corollary \ref{cor:strongunstable}). 
This provides an important new answer to the weak-strong problem. Note that historically, \emph{dynamic} stability, 
i.e.\ under perturbation of the \emph{initial} data, has been considered. \cite{elling-liu-rims05} observes 
numerically that both weak- and strong-type reflection are dynamically stable, so any mathematical 
result to the contrary appears to use an overly restrictive definition of stability.

\section{Potential flow}
\mylabel{section:potf}

Self-similar potential flow is the second-order quasilinear PDE
\begin{alignat}{1}
    \nabla\cdot(\rho\nabla\chi)+2\rho &= 0. \myeqlabel{eq:sspf-divform}
\end{alignat}
Here 
\begin{alignat}{1}
    \chi &= \psi-\frac12|\vec\xi|^2.
\end{alignat}
$\vec\xi=(\xi,\eta)=\vec x/t$ are \defm{similarity coordinates}.
$\chi$ is called \defm{pseudo-potential}. $\psi$ is the \defm{velocity potential}: physical velocity is
$$\vec v=\nabla\psi.$$
Moreover, density is
\begin{alignat}{1}
    \rho &= \pi^{-1}(-\chi-\frac12|\nabla\chi|^2).\myeqlabel{eq:rhofun}
\end{alignat}
$\pi$ satisfies
\begin{alignat}{1}
  \frac{d\pi}{d\rho}=\rho^{-1}\frac{dp}{d\rho}=\rho^{-1}c^2,\qquad(\pi^{-1})'=\rho c^{-2} 
  \myeqlabel{eq:pideriv}
\end{alignat}
where
$$p(\rho)=\frac{\rho_0c_0^2}{\gamma}\left(\frac{\rho}{\rho_0}\right)^\gamma$$
is the equation of state ($\rho_0,c_0$ free parameters). 
The \defm{ratio of heats}\footnote{also: \defm{isentropic coefficient}}
is restricted to $\gamma\in(1,\infty)$. 
Differentiation of \myeqref{eq:sspf-divform} yields the non-divergence form
\begin{alignat}{1}
  (c^2I-(\nabla\psi-\vec\xi)^2):\nabla^2\psi &= 0. \myeqlabel{eq:nondivpsi}
\end{alignat}
Here $A:B$ is the Frobenius product $\trace(A^TB)$, $\vec w^2:=\vec w\otimes\vec w=\vec w\vec w^T$ (as opposed
to $|\vec w|^2=\vec w\cdot\vec w$) and $\nabla^2$ is accordingly the Hessian.
In coordinates:
\begin{alignat}{1}
  (c^2-(\psi_\xi-\xi)^2)\psi_{\xi\xi}-2(\psi_\xi-\xi)(\psi_\eta-\eta)\psi_{\xi\eta}+(c^2-(\psi_\eta-\eta)^2)\psi_{\eta\eta} &= 0.
\end{alignat}
$c$ is the sound speed, defined by
\begin{alignat}{1}
    c^2 &= c_0^2+(1-\gamma)(\chi+\frac12|\nabla\chi|^2).
\end{alignat}
It is sometimes more convenient to use the form
\begin{alignat}{1}
    (c^2I-\nabla\chi^2):\nabla^2\chi + 2c^2 - |\nabla\chi|^2 = 0. \myeqlabel{eq:nondivchi}
\end{alignat}
This form is manifestly translation-invariant. Translation is nontrivial: in $(t,x,y)$ coordinates it corresponds
to a change of inertial frame
\begin{alignat}{1}
  \vec v &\leftarrow\vec v-\vec w,\qquad \vec\xi=\vec x/t \leftarrow\vec\xi-\vec w, \myeqlabel{eq:chframe}
\end{alignat}
where $\vec w$ is the velocity of the new frame relative to the old one. Obviously the \defm{pseudo-velocity}
$$\vec z:=\nabla\chi=\nabla\psi-\vec\xi$$
does not change.

Self-similar potential flow is mixed-type; the local type is determined by 
the coefficient matrix $c^2I-\nabla\chi^2$ which is positive definite if and 
only if $L<1$, where 
\begin{alignat}{1}
    L &:= \frac{|\vec z|}{c}=\frac{|\vec v-\vec x/t|}{c} \mylabel{eq:pseudo-Mach}
\end{alignat}
is called \defm{pseudo-Mach number}. 
For $L>1$ the equation is hyperbolic; parabolic is $L=1$.
$L$ and $\vec z$ are the Mach number and velocity perceived by an observer traveling on the
ray $\vec x=t\vec\xi$.

On a solid wall the \defm{slip condition}
\begin{alignat}{1}
  \nabla\chi\cdot\vec n &= 0 \myeqlabel{eq:slipcond}
\end{alignat}
holds; for an observer traveling on the wall it corresponds to the usual 
\begin{alignat}{1}
  \vec v\cdot\vec n &= \nabla\psi\cdot\vec n =  0 \myeqlabel{eq:psislip}
\end{alignat}

\section{Shock conditions}

The weak solutions of potential flow are defined by \myeqref{eq:sspf-divform}.
The corresponding Rankine-Hugoniot condition is 
\begin{alignat}{1}
  \rho_uz^n_u &= \rho_dz^n_d \myeqlabel{eq:rh-z} 
\end{alignat}
where $u,d$ indicate the limits on the \defm{upstream} and \defm{downstream} side and $z^n$, 
$z^t$ are the normal and tangential
component of $\vec z$.
As the equation is second-order, we must additionally require continuity of the potential:
\begin{alignat}{1}
  \psi_u &= \psi_d.
\end{alignat}
By taking a tangential derivative, we obtain
\begin{alignat}{1}
  z^t_u &= z^t_d =: z^t.
\end{alignat}

Observing that $\sigma=\vec\xi\cdot\vec n$ is the shock speed, we obtain the more familiar form
\begin{alignat}{1}
  \rho_uv^n_u - \rho_dv^n_d &= \sigma(\rho_u-\rho_d), \myeqlabel{eq:rh-v} \\
  v^t_u &= v^t_d =: v^t. \myeqlabel{eq:vtan}
\end{alignat}

Fix the unit shock normal $\vec n$ so that $z^n_u>0$ which implies $z^n_d>0$ as well.
To avoid expansion shocks we must require the admissibility condition $z^n_u\geq z^n_d$, which is equivalent to
\begin{alignat}{1}
  v^n_u &\geq v^n_d.
\end{alignat}
We choose the unit tangent $\vec t$ to be $90^\circ$ counterclockwise from $\vec n$.

By \myeqref{eq:vtan} the tangential components of the velocity are continuous across the shock,
so the velocity jump is normal. 
Assuming $v^n_u>v^n_d$ (positive shock strength), we can express the shock normal as 
\begin{alignat}{1}
  \vec n &= \frac{\vec v_u-\vec v_d}{|\vec v_u-\vec v_d|}. \myeqlabel{eq:normal-v}
\end{alignat}

\section{Nonexistence of some global RR}

We start with some facts about the shock polar.
\begin{proposition}
  \mylabel{prop:shockpolar}%
    Consider arbitrary $c_u,\rho_u>0$ and $M_u\in(1,\infty)$ and set $\vec v_u=(M_uc_u,0)$.
    For each $\beta\in(-90^\circ,90^\circ)$ there is a steady shock 
    with downstream unit normal $\vec n=(\cos\beta,\sin\beta)$.
    Its downstream data depends smoothly on $\beta$. Let $\tau$ be counterclockwise angle from
    $\vec v_u$ to $\vec v_d$.
    We restrict 
    $|\beta|<\arccos\frac{1}{M_u}$
    so that the shock is admissible.

    Then the \defm{shock polar} $\beta\mapsto\vec v_d$ is smooth and strictly convex, 
    with $\partial_\beta\vec v_d$ 
    nowhere zero. 

    There is an angle $\tau_*\in(0^\circ,90^\circ)$ so that each $\tau\in(-\tau_*,\tau_*)$
    is attained for two different $\beta$.
    The one with smaller $|\vec v_d|$ yields a strong-type shock, the other one weak-type.
    For $|\tau|=\tau_*$ they are identical and critical-type.

    There is a $\tau_s\in(0,\tau_*)$ so that the weak-type shocks are supersonic
    for $|\tau|>\tau_s$, transonic for $|\tau|<\tau_s$. The other types are always transonic.

    If $M_u\downarrow 1$ with $\rho_u,c_u$ fixed, then $\tau_*\downarrow 0$.
\end{proposition}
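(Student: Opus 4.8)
The plan is to reduce the polar to a one-parameter family of normal-shock problems. Working with the frame-independent pseudo-velocity, for which the jump relations \myeqref{eq:rh-z}, \myeqref{eq:vtan} and the admissibility $z^n_u\ge z^n_d$ hold (and which plays the role of $\vec v$ here), I project onto the normal $\vec n=(\cos\beta,\sin\beta)$ and set $q:=v^n_u=M_uc_u\cos\beta$. Tangential continuity fixes $v^t_d=v^t_u$, while the mass relation together with the Bernoulli-type invariant $\frac{c^2}{\gamma-1}+\frac12|\vec v|^2$ — continuous across the shock because the pseudo-potential $\chi$ is — determine the downstream normal speed $v^n_d=rq$ through a single scalar equation for the compression ratio $r\in(0,1]$. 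First I would show that for $q>c_u$, i.e.\ $\cos\beta>\tfrac{1}{M_u}$ or $|\beta|<\arccos\frac{1}{M_u}$, this equation has, besides the trivial root $r=1$, exactly one compressive root $r<1$, depending smoothly on $q$ (hence on $\beta$) by the implicit function theorem; this gives smoothness of $\beta\mapsto\vec v_d$. The computation collapses to the clean form
\[
  \vec v_d=\vec v_u-g(\beta)\,\vec n(\beta),\qquad g(\beta):=v^n_u-v^n_d=q\,(1-r)\ge 0,
\]
so that, on the open admissible interval where $g>0$, the tangent $\partial_\beta\vec v_d=-g'\vec n-g\vec t$ satisfies $|\partial_\beta\vec v_d|^2=g^2+g'^2>0$, proving it is nowhere zero.

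Next I would establish strict convexity by a curvature computation. Differentiating the displayed formula twice (using $\vec n'=\vec t$, $\vec t'=-\vec n$) gives $\partial_\beta^2\vec v_d=(g-g'')\vec n-2g'\vec t$, and the signed-curvature numerator is $\partial_\beta\vec v_d\times\partial_\beta^2\vec v_d=g^2+2g'^2-g\,g''$. Thus strict convexity is equivalent to the pointwise inequality $g^2+2g'^2-g\,g''>0$ on the admissible interval (the sign being fixed by orientation). I expect this to be the main obstacle: one must differentiate the implicit shock relation twice to express $g'$ and $g''$ through $r$, $q$ and the equation of state, and then verify the inequality for all $\gamma\in(1,\infty)$ and all admissible $\beta$. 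This is the potential-flow analogue of the classical convexity of the shock polar, and I would either carry out the reduction directly or adapt the known Hugoniot-convexity argument to the isentropic (Bernoulli) jump relations used here.

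Granting convexity and $\partial_\beta\vec v_d\neq0$, the remaining structural claims are geometric. Since $g\to0$ as $\beta\to\pm\arccos\frac{1}{M_u}$, both endpoints of the curve limit to $\vec v_u$, so the polar is a closed, strictly convex loop through $\vec v_u$, symmetric about the $\vec v_u$-axis. From the formula the first component of $\vec v_d$ equals $M_uc_u\bigl(1-(1-r)\cos^2\beta\bigr)>0$, so the whole loop lies in the open half-plane on the $\vec v_u$ side; in particular the velocity origin is exterior to it. The deflection $\tau$ is exactly the polar angle of $\vec v_d$ about the origin, so a ray from the origin meets the convex loop in two points for $|\tau|$ below the tangency value $\tau_*$ and is tangent at $|\tau|=\tau_*$; the nearer intersection (smaller $|\vec v_d|$) is the strong-type, the farther the weak-type shock, and they coincide at $\tau_*$. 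That the loop lies in a half-plane gives $\tau_*<90^\circ$, and that it bulges off the axis (for $\beta\neq0$ the second component $-g\sin\beta$ is nonzero) gives $\tau_*>0$.

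Finally, for the sonic threshold I would show the downstream pseudo-Mach number $L_d=|\vec v_d|/c_d$ (see \myeqref{eq:pseudo-Mach}) depends monotonically on the shock strength — stronger shocks, i.e.\ smaller $|\beta|$, yield smaller $L_d$ — which I read off from the same scalar relation. Two endpoint evaluations then pin down the sub-intervals: in the vanishing-shock limit $L_d\to M_u>1$, whereas at the critical point the weak and strong branches coincide, so there $L_d$ equals the subsonic strong-branch value; by monotonicity and the intermediate value theorem this yields a single $\tau_s\in(0,\tau_*)$ separating the supersonic from the transonic weak-type shocks, and shows the strong- and critical-type shocks are transonic throughout. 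The limit $M_u\downarrow1$ is then immediate: the admissible interval $|\beta|<\arccos\frac{1}{M_u}$ shrinks to $\{0\}$ and every admissible normal Mach number $M_u\cos\beta\in(1,M_u)$ tends to $1$, so $g\to0$ uniformly, $\vec v_d\to\vec v_u$ uniformly, and hence $\tau_*\downarrow0$.
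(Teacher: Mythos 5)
Your final paragraph coincides with the only part the paper actually proves: the paper's proof of Proposition \myref{prop:shockpolar} simply cites \cite[Theorem 1]{elling-sonic-potf} and \cite[Proposition 2.10]{elling-liu-pmeyer} for everything except the last claim, and then argues exactly as you do that as $M_u\downarrow 1$ the admissible range $|\beta|<\arccos\frac{1}{M_u}$ shrinks to $\{0\}$, so that by continuity the whole polar collapses to $\vec v_u$ and $\tau_*\downarrow 0$. The rest of your proposal attempts, from scratch, precisely what the paper outsources to those references, and while the skeleton is sound (reduction to a scalar normal-shock relation, the representation $\vec v_d=\vec v_u-g(\beta)\vec n(\beta)$, the curvature expression $g^2+2g'^2-g\,g''$, the tangent-ray geometry with the origin exterior to the convex loop), it is not yet a proof: the two claims carrying all the analytic content are left as declarations of intent.

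Concretely: (a) \emph{strict convexity}. You correctly reduce it to the pointwise inequality $g^2+2g'^2-g\,g''>0$ on the admissible interval, and then state you ``would either carry out the reduction directly or adapt the known Hugoniot-convexity argument.'' That inequality, verified for the isentropic jump relations for all $\gamma\in(1,\infty)$ and all admissible $\beta$, is the heart of the proposition --- the two-intersection structure, the existence of $\tau_*$, and the weak/strong dichotomy all hang on it --- and it is exactly the content the paper imports from the cited references. (b) \emph{the sonic threshold}. The monotonicity of $L_d=|\vec v_d|/c_d$ along the polar is asserted without derivation (not obvious, since both $|\vec v_d|$ and $c_d$ vary with $\beta$), and, worse, your anchor at the critical point is circular: you say that there ``$L_d$ equals the subsonic strong-branch value,'' but subsonicity of the strong branch --- equivalently, that the sonic point lies strictly inside the weak branch, i.e.\ $\tau_s<\tau_*$ with strong- and critical-type shocks always transonic --- is exactly what is to be proven. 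Monotonicity of $L_d$ only gives a unique sonic angle $\beta_{\mathrm{son}}$; one must still show that $\beta_{\mathrm{son}}$ lies strictly beyond the maximum-deflection angle $\beta_*$, which requires comparing the zero of $\partial_\beta\tau$ with the sonic point, and no such comparison appears in your sketch. A final remark: the classification your argument would deliver --- supersonic for $|\tau|<\tau_s$, transonic for $|\tau|>\tau_s$ --- is the reverse of the proposition's literal wording, but it agrees with the classical shock-polar picture and with how the paper actually uses the result (small $\tau$ gives $M_2>1$, while $\tau\approx\tau_*$ gives $M_3<1$ in the corollary's proof), so that discrepancy reflects a transposition in the paper's statement rather than an error in your orientation.
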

\begin{proof}
    Most has been shown in 
    \cite[Theorem 1]{elling-sonic-potf} 
    and \cite[Proposition 2.10]{elling-liu-pmeyer}; we only need to prove the last statement.
    Admissible shocks are those for $|\beta|\leq\arccos\frac{1}{M_u}$. As $M_u\downarrow 1$,
    this range shrinks to $\{0\}$. 
    By continuity, all points on the shock polar approach $\vec v_u$.
    In particular $\tau_*\downarrow 0$.
\end{proof}

\begin{figure}
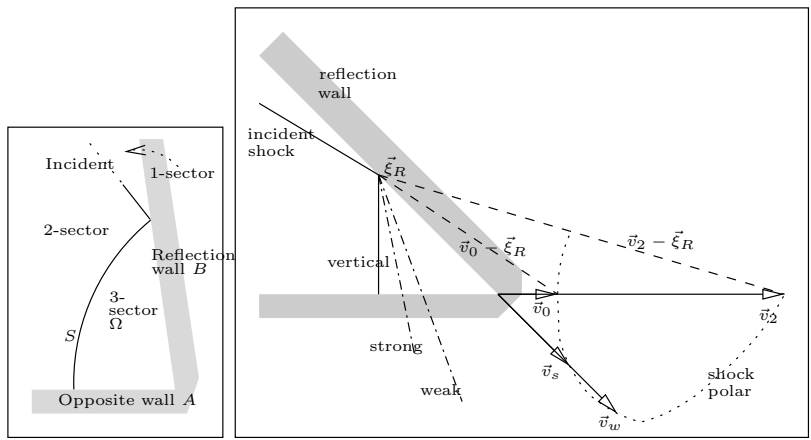

\input{trans.pstex_t}
\input{polarg2.pstex_t}
\caption{Left: transonic global RR, angle condition satisfied. 
Self-similar potential flow is elliptic in
$\Omega$, hyperbolic elsewhere. Right: Shock polar argument.}
\mylabel{fig:trans-left}
\mylabel{fig:polarg-right}
\end{figure}

Let $\Omega$ be the $3$-sector excluding boundary (Figure \ref{fig:trans-left} left), $A$ opposite wall, $B$ reflection wall, $S$ reflected shock,
each not containing its endpoints.

\begin{proposition}
    \mylabel{prop:vvert}%
    Consider the setting of Theorem \myref{th:detach-wrong},
    with $(0,0)$ the wall-wall corner (see Figure \myref{fig:polarg-right}).
    The vertical straight 
    shock with upstream data $\vec v_2,\rho_2,c_2$ through the reflection point $\vec\xi_R$
    has a downstream velocity $\vec v_0=(v^x_0,0)$ with $v^x_0>0$. 
    The same holds for all vertical shocks to the right of it.
\end{proposition}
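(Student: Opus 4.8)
The plan is to read off $\vec v_0$ directly from the Rankine--Hugoniot relations for a shock with horizontal normal, and to reduce positivity of $v^x_0$ to the position of the reflection point together with one sign of the upstream pseudo-velocity. First I would put the corner at the origin. The constant $2$-sector borders the opposite wall $A$ along a segment (Figure \ref{fig:trans-left}); since $A$ runs through the corner, the slip condition \myeqref{eq:slipcond}, $(\vec v_2-\vec\xi)\cdot\vec n=0$, holds for all $\vec\xi$ on $A$, and because $\vec\xi\cdot\vec n=0$ there this forces $\vec v_2$ to be parallel to $A$. Choosing coordinates as in Figure \ref{fig:polarg-right}, with $A$ horizontal, one gets $\vec v_2=(v^x_2,0)$, while the reflection point sits on the reflection wall at $\vec\xi_R=(\xi_R,\eta_R)$ with $\xi_R>0$.

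Next I would write the vertical shock as the line $\xi=s$ (so $s=\xi_R$ for the shock through $\vec\xi_R$, and $s>\xi_R$ for those to its right), with horizontal unit normal $\vec n$ and vertical tangent. By \myeqref{eq:vtan} the tangential velocity is continuous, and $v^t_2=v^y_2=0$, so $v^y_0=0$ and indeed $\vec v_0=(v^x_0,0)$; equivalently the velocity jump is normal by \myeqref{eq:normal-v}, hence horizontal. Orienting $\vec n$ so that $z^n_2>0$, the mass relation \myeqref{eq:rh-z} gives $z^n_0=(\rho_2/\rho_0)\,z^n_2>0$. On the line $\xi=s$ the shock speed is $\sigma=\vec\xi\cdot\vec n$ and $z^n_0=v^x_0-s$, so $v^x_0=s+z^n_0>s\ge\xi_R>0$. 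Since the upstream state is the same $(\vec v_2,\rho_2,c_2)$ for every such $s$, this argument is uniform and covers all vertical shocks to the right of $\vec\xi_R$.

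The one step that is not automatic --- and the part I expect to be the real work --- is fixing the orientation of $\vec n$, i.e.\ showing $v^x_2>s$ (equivalently $v^x_2>\xi_R$) so that the normal points in the $+\xi$ direction rather than $-\xi$; the opposite choice would only yield $v^x_0<s$ and no sign. This is a statement about the reflection geometry, not about a single shock: a compressive vertical shock with the $2$-sector upstream requires the horizontal pseudo-Mach number $(v^x_2-s)/c_2$ (see \myeqref{eq:pseudo-Mach}) to have modulus at least $1$, and in the configuration of Theorem \ref{th:detach-wrong} the $2$-sector pseudo-flow $\vec v_2-\vec\xi_R$ at the reflection point is directed into the corner region, i.e.\ has positive $\xi$-component (Figure \ref{fig:polarg-right}). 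Once this sign and $\xi_R>0$ are secured, the computation of the previous paragraph finishes the proof.
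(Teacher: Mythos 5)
Your reduction is clean Rankine--Hugoniot bookkeeping, but it rests on two geometric claims that you never establish, and one of them is actually false in the paper's configuration. The deferred orientation claim is the lesser issue (note, though, that ``$v^x_2>s$ (equivalently $v^x_2>\xi_R$)'' is not an equivalence: $v^x_2>\xi_R$ does not give $v^x_2>s$ for $s>\xi_R$, and admissible vertical shocks with $s>v^x_2+c_2$ cross in the opposite sense). The fatal one is $\xi_R>0$. In the paper's setup the opposite wall meets the reflection wall \emph{to the right of the reflection point} (Section 1), so with the corner at the origin the reflection point lies up and to the \emph{left} of it: $\xi_R<0$. This is exactly what Figure \myref{fig:polarg-right} depicts ($\vec\xi_R$ above and left of the corner, $\vec v_0$ lying strictly between the corner and $\vec v_2$ on the horizontal axis), and it is forced by the paper's own deduction ``$\vec v_0-\vec\xi_R$ points into the reflection wall, therefore $v^x_0>0$'', which is only valid when the wall line meets the positive $\xi$-axis from above-left. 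Your computation gives $v^x_0=(1-\lambda)s+\lambda v^x_2$ with $\lambda=\rho_2/\rho_0\in(0,1)$, i.e.\ $v^x_0>s$; when $s=\xi_R<0$ this says nothing about the sign of $v^x_0$, so the chain $v^x_0>s\geq\xi_R>0$ collapses at its last link.

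The deeper, structural problem is that your argument never invokes the hypothesis that the angle condition is violated (nor the shock polar, Proposition \myref{prop:shockpolar}), yet the proposition is false without it: mass balance \myeqref{eq:rh-z}, tangential continuity and admissibility hold verbatim when the angle condition is satisfied, and in that case the downstream velocity of the vertical shock through $\vec\xi_R$ has $v^x_0<0$ --- which is precisely why global transonic RR can exist in that regime \cite{elling-sonic-potf}. The sign of $v^x_0$ is a statement about the \emph{strength} of the vertical shock (how far $\vec v_0$ is dragged from $\vec v_2$ toward, and possibly past, the corner), and no amount of Rankine--Hugoniot bookkeeping fixes it. The paper obtains it from the polar: the violated angle condition makes both reflected-shock tangents point down and strictly right, so the vertical shock through $\vec\xi_R$ has smaller $|\beta|$ than either the weak- or strong-type shock, and strict convexity of the shock polar then places $\vec v_0-\vec\xi_R$ pointing into the reflection wall, hence $\vec v_0$ on the positive $\xi$-axis; the ``shocks to the right'' statement is then handled by the monotonicity result of \cite{elling-liu-pmeyer}, not by a uniform-in-$s$ computation. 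Some quantitative input of this kind is indispensable, so the gap cannot be closed merely by ``fixing the orientation of $\vec n$'' as you propose.
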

\begin{proof}
    First change to a coordinate frame with origin in the reflection point. 
    For this observer the $2$-sector velocity is $\vec v_2-\vec\xi_R$, which points into the reflection wall (Figure \ref{fig:polarg-right} right).
    Consider shocks with that upstream velocity and upstream density $\rho_2$ and sound speed $c_2$;
    let $\vec v_d-\vec\xi_R$ be the downstream velocity.
    Let $\beta$ be the counterclockwise angle from $\vec v_2-\vec\xi_R$ to shock downstream normal $\vec n$;
    by \myeqref{eq:normal-v} $\vec n$ is a positive multiple of $\vec v_2-\vec v_d$. 

    By design (slip condition), 
    the velocities $\vec v_w-\vec\xi_R$ and $\vec v_s-\vec\xi_R$ for weak-type and strong-type reflected shock
    are on the extension of the reflection wall into a line
    (see Figure \myref{fig:polarg-right} right). 
    By assumption of Theorem \myref{th:detach-wrong} the angle condition is violated,
    so the strong-type and therefore the weak-type reflected shock tangent in
    the reflection point are down and strictly right. 
    Thus the vertical shock through the reflection point has smaller $|\beta|$ than either type, so by 
    strict convexity of
    the shock polar (Proposition \myref{prop:shockpolar}) $\vec v_0-\vec\xi_R$ points into the reflection wall
    (see Figure \myref{fig:polarg-right}). Therefore $v^x_0>0$, since $\vec v_2$ and thus $\vec v_0$ are horizontal.

    By \cite[Proposition 2.9]{elling-liu-pmeyer}, vertical shocks more to the right have $v^x_d-\xi_R>v^x_0-\xi_R$ (because they
    are weaker, so $\vec v_d$ is closer to $\vec v_2$). Hence $v^x_d>v^x_0>0$ as well.
\end{proof}

\begin{proof}[Proof of Theorem \myref{th:detach-wrong}]
    Consider the same coordinates as in the statement of Proposition \myref{prop:vvert}.
    Restrict $\psi$ to $\overline\Omega$ (taking its $\Omega$-side limits on $\partial\Omega$).
    Let $\psi_0$ be the value of $\psi$ in the reflection point $\vec\xi_R$.
    Let $S_0$ be the straight vertical shock through the reflection point;
    let $\sigma_0$ be its $\xi$ coordinate.

    Consider a transonic global RR.

    Again by assumption the angle condition does not hold,
    so the reflection point shock tangent 
    points down and strictly right (as in Figure \myref{fig:nopert-right} right, as opposed to
    Figure \myref{fig:trans-left} left).
    The upstream velocity $\nabla\psi=\vec v_2$ has $\psi_x=v^x_2>0$, so necessarily
    $\psi>\psi_0$ at the shock \emph{near} the reflection point $\vec\xi_R$. 
    Therefore, the global maximum of $\psi$ over $\overline\Omega$ (which must be attained
    since $\overline\Omega$ is compact and $\psi$ continuous)
    is greater than $\psi_0$ and not attained in $\vec\xi_R$. 
    
    Consider a maximum $>\psi_0$ in a point $\vec\xi\in\overline S-\{\vec\xi_R\}$.
    The shock tangent is vertical in $\vec\xi$
    (by $\psi_t=0$ for a maximum at $S$; by the 
    slip condition \myeqref{eq:psislip} at $A$ for a maximum in the point
    where $\overline S$ meets $A$).
    Moreover, $\psi>\psi_0$ implies the shock is right
    of the vertical reflection point shock because $\psi$ is continuous across the shock and
    $\psi_x=v^x_2>0$ on the upstream side. 
    Hence by Proposition \myref{prop:vvert},
    $\psi_\xi=v^x\geq v^x_0>0$.
    This is incompatible with a local maximum.

    Hence $\psi$ does not attain its $\overline\Omega$-maximum anywhere on $\overline S$.
    Then the same is true for 
    $$\hat\psi:=\psi+\delta\xi$$
    if we choose $\delta>0$ sufficiently small. 
    By linearity   
    $$(I-c^{-2}\nabla\chi^2):\nabla^2\hat\psi=(I-c^{-2}\nabla\chi^2):\nabla^2\psi=0,$$
    so by the strong maximum principle the maximum is not attained in $\overline\Omega$ either
    (if we choose $\delta>0$ so small that $\hat\psi$, like $\psi$, cannot be constant).
    Moreover
    $$\nabla\hat\psi\cdot\vec n=\nabla\psi\cdot\vec n+\delta n^x=\delta n^x\geq 0$$
    on $A$ and $B$, so the Hopf lemma rules out local maxima there. 
    
    Finally, the boundary conditions on $A$ and $B$, combined with $C^1$ continuity in $0$ (Definition \ref{def:flow-class}),
    imply $\nabla\psi(0)=0$, so $\hat\psi_\xi(0)=\psi_\xi(0)+\delta>0$, thus a local maximum in $0$ is
    impossible. 
    
    We have ruled out every possible global maximum point in $\overline\Omega$. The contradiction
    demonstrates that no $\psi$ with the desired properties exists. 
\end{proof}

\section{Numerical comparison}

\mylabel{section:numerics}

\begin{figure}
\includegraphics[width=.49\linewidth]{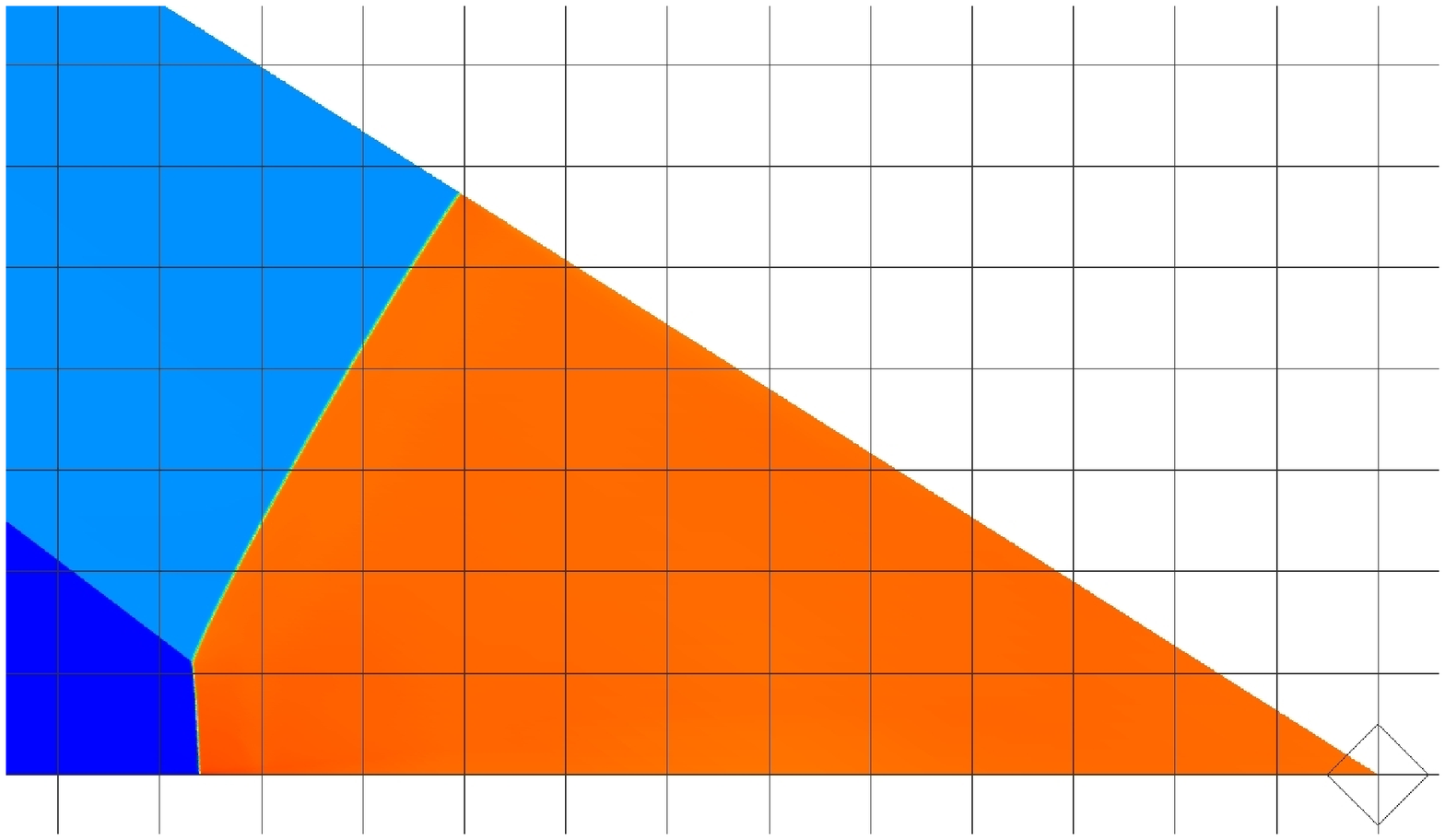}
\includegraphics[width=.49\linewidth]{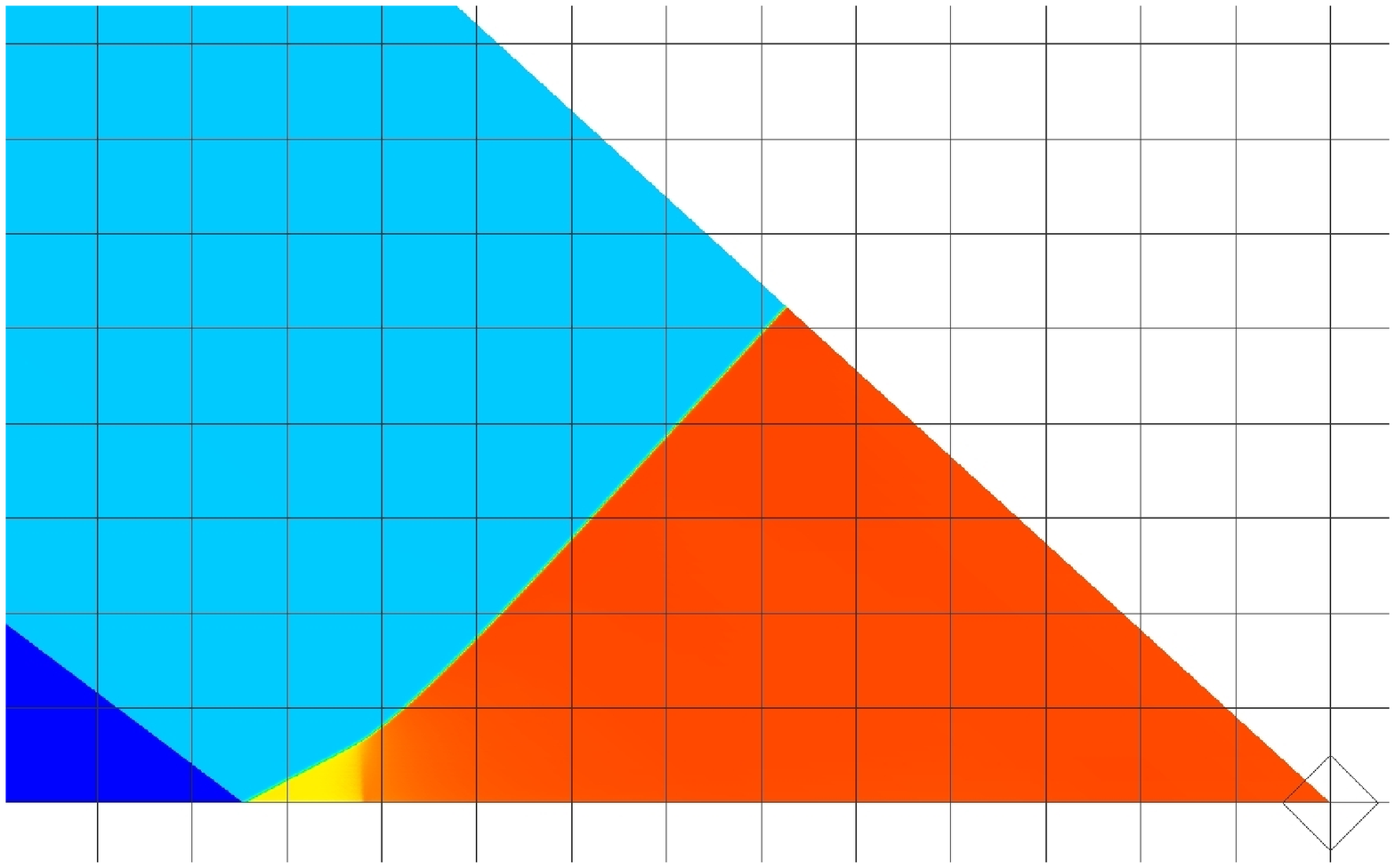}
\caption{[Reflection and opposite wall are exchanged in these diagrams.]
Left: MR for $\theta=147.9^\circ$. Instead of meeting on the bottom wall,
incident (left) and reflected (right) shock meet in a triple point
with a vertical shock (Mach stem).
Right: RR for $\theta=137.9^\circ$.
Near the reflection point the flow is hyperbolic; the transition
to elliptic is discontinuous (MR).}
\mylabel{fig:highertheta-left}%
\mylabel{fig:lowertheta-right}%
\end{figure}

Theorem \myref{th:detach-wrong} concerns the range of parameters with transonic 
weak-type RR, which is so narrow (see Figure \myref{fig:transition} right) that 
numerics and experiments have not been able to settle questions for these flows. 
However, the range with \emph{supersonic} weak-type RR violating the angle 
condition is much larger and certainly interesting by itself. 

For $\gamma=7/5$, $M_1\approx3$ and $\alpha=0^\circ$, $\theta=142.9^\circ$
corresponds exactly to a strong-type trivial RR (i.e.\ strong-type shock perpendicular
to opposite wall).
We change $\theta$ by $5^\circ$ to $147.9^\circ$ without changing $\theta+\alpha$ or $M_I$.
This way the opposite wall angle changes, but not the local RR parameters.
The numerical results in Figure \myref{fig:highertheta-left} left show an MR.

We also study the opposite perturbation, to $\theta=137.9^\circ$ (see Figure
\myref{fig:lowertheta-right} right). As expected there
is still a local RR. The shock is essentially the strong-type shock, 
except in a small neighbourhood of the reflection point where it is weak-type and slightly hyperbolic.
As $\theta\uparrow142.9^\circ$, this neighbourhood shrinks to zero;
it appears that the pattern converges to the trivial strong-type RR in this manner.
This is why strong-type reflections are observed at a large scale sometimes.
Note that there is a MR as well: at the transition from hyperbolic to elliptic.

The calculations were made with a second-order scheme on an unstructured grid; 
other choices have no influence on the qualitative structure (RR vs.\ MR).

In principle Definition \myref{def:flow-class} and Theorem \myref{th:detach-wrong} 
could be extended to the supersonic cases.
But while in the transonic case the flow is simple and predictable, at least for
small perturbations from trivial RR, the supersonic cases can have several different 
qualitative structures. \cite{chen-feldman-selfsim-journal,elling-rrefl} construct
self-similar RR with a continuous transition from hyperbolic to elliptic in the $3$-sector,
but Figure \myref{fig:lowertheta-right} right shows a MR, i.e.\ a discontinuous transition;
double Mach reflection and other more complicated flows are possible too. Proving
nonexistence in function classes large enough to accomodate all these structures is 
far beyond present-day techniques.

\section{Interpretation}

\mylabel{section:newcrit}

\begin{figure}
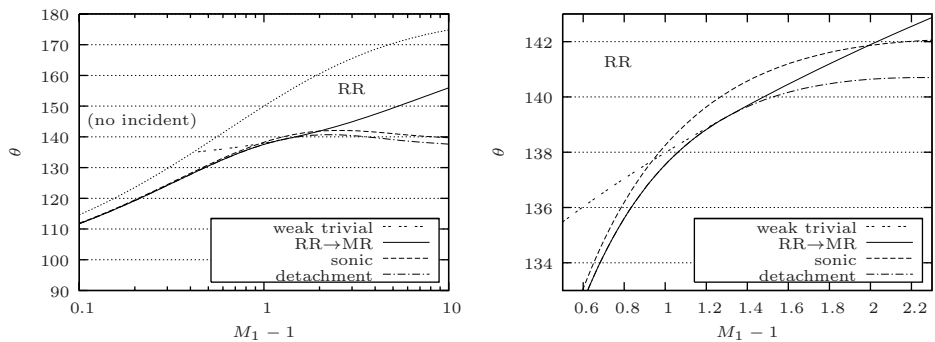

\input{rr-hor-potf.pstex_t}
\input{rr-hor-detail-potf.pstex_t}
\caption{Proposed RR$\rightarrow$MR transition, for $\gamma=7/5$ potential flow
with $\alpha=0^\circ$. Right: detail.
Weak-type reflection is supersonic above the ``sonic'' curve, transonic below;
neither type exists below the ``detach'' curve. Theorem \myref{th:detach-wrong}
rules out global RR below the solid curve.}
\mylabel{fig:transition}
\end{figure}

Despite the theorem and numerical examples, it is likely that the detachment criterion is still valid
over a large part of the parameter space. In particular, the author believes that it is correct
in the classical case $\alpha=90^\circ$, $\theta<90^\circ$. 
We propose the following new criterion:
\myquote{%
    The global flow is RR if and only if local RR exists and 
    angle condition is satisfied.
}
Strong-type RR would appear only in the trivial right-angle borderline case separating global RR and global MR.

Figure \myref{fig:transition} shows the regions predicted by this criterion for $\gamma=7/5$ and 
$\alpha=0^\circ$. 
Cases that have already been treated by construction of an exact solution or another rigorous method:
\begin{enumerate}
\item Nonexistence of global RR below the detachment criterion is trivial.
\item Nonexistence of global RR below the solid curve is done in this article
    for transonic weak-type RR by Theorem \myref{th:detach-wrong}.
\item Existence of global transonic RR is done by \cite{elling-sonic-potf} for some neighbourhood
    of each point on the transonic part of the ``weak trivial'' curve in Figure \myref{fig:transition}
    (excluding endpoints).
\item 
  \cite{chen-feldman-selfsim-journal,elling-rrefl} construct global supersonic RR for 
  some of the supersonic parameters, in particular some neighbourhood of 
  each point of the supersonic 
  part of the ``weak trivial'' curve in Figure \myref{fig:transition}
  (excluding endpoints).
\end{enumerate}
In principle, \cite{chen-feldman-selfsim-journal,elling-liu-pmeyer,elling-rrefl,elling-sonic-potf} go
a long way towards constructing global RR in all cases not covered so far.
In comparison, global MR is very difficult:
the triple point, well-known to be theoretically
impossible (\defm{von Neumann paradox}), has a very complicated detail structure,
according to numerical results of Hunter/Tesdall (\cite{hunter-tesdall,tesdall-sanders-keyfitz}), see also 
\cite{vasiliev-kraiko,skews-ashworth}),

\bibliographystyle{plain}
\bibliography{../../../pmeyer/elling}

\end{document}